\newtheorem{theorem}{Theorem}
\newif\ifcode
\newcommand{\lit}[1]{\relax\ifmmode
  \mathord{\mathcode`\-="702D\sf #1\mathcode`\-="2200}\else
  {\it #1}\fi}
\newcommand{\ms}[1]{\relax\ifmmode
  \mathord{\mathcode`\-="702D\it #1\mathcode`\-="2200}\else
  {\it #1}\fi}
\newcommand{\partsize}{}
\newcommand{\size}{\footnotesize}
\newcommand{\smallsize}{\footnotesize}
\algrenewcommand{\algorithmiccomment}[1]{\textsf{\size {\bf //} #1}}
\algnewcommand{\algorithmicto}{\textbf{to}}
\algrenewcommand{\algorithmicindent}{1em}
\def\compactify{\itemsep=0pt \topsep=0pt \partopsep=0pt \parsep=0pt}
\let\latexusecounter=\usecounter
\newenvironment{CompactEnumerate}
  {\def\usecounter{\compactify\latexusecounter}
   \begin{enumerate}}
  {\end{enumerate}\let\usecounter=\latexusecounter}
\begin{document}

\conferenceinfo{MobiShare'06,} {September 25, 2006, Los Angeles, California, USA.}
\CopyrightYear{2006}
\crdata{1-59593-558-4/06/0009}

\pagestyle{plain}

\title{Energy Aware Self-organizing Density Management in Wireless 
sensor networks}

\numberofauthors{3}

\author{
\alignauthor Erwan Le Merrer\\
       \affaddr{France Telecom and }\\
       \affaddr{IRISA, France }\\
	   \email{erwan.lemerrer@orange-ft.com}
\alignauthor Vincent Gramoli, \\ Anne-Marie Kermarrec, \\ 
 Aline C. Viana \\
       \affaddr{IRISA/INRIA, France}\\
       \email{Aline.Viana@irisa.fr}
\alignauthor Marin Bertier \\
       \affaddr{IRISA/INSA, France}\\
       \email{Marin.Bertier@irisa.fr}
}

\date{25 September 2006}
\maketitle

\begin{abstract}

Energy consumption is the most important factor that determines sensor
node lifetime.  The optimization of wireless sensor network lifetime
targets not only the reduction of energy consumption of a single
sensor node but also the extension of the entire network lifetime. We
propose a simple and adaptive energy-conserving topology management
scheme, called SAND (Self-Organizing Active Node Density). SAND is
fully decentralized and relies on a distributed probing approach and
on the redundancy resolution of sensors for energy optimizations,
while preserving the data forwarding and sensing capabilities of the
network. We present the SAND's algorithm, its analysis of convergence,
and simulation results. Simulation results show that, though slightly
increasing path lengths from sensor to sink nodes, the proposed scheme
improves significantly the network lifetime for different neighborhood
densities degrees, while preserving both sensing and routing fidelity.

\end{abstract}

\category{C.2.1}{Computer-Communication Networks}{Network Architecture and Design}[distributed
networks, wireless communication]

\terms{Design, Management}

\keywords{Wireless sensor networks, topology 
management, peer-to-peer distributed systems.}

\section{Introduction}
\label{sec:introduction}

\noindent{\bf Context.}  Area monitoring is one of the most typical
applications of wireless sensor networks (WSNs). It consists in
deploying a large number of sensors in a given geographic area, for
collecting data or monitoring events.  It is not unusual that in this
situation, human intervention is not feasible. Sensors are then thrown
in mass, for example from a plane, and must be able to form a network
and to operate in a decentralized self-organized manner, maintaining
connectivity and area monitoring as long as possible. In addition,
because of the absence of wire and the small physical size of sensors,
WSNs have strong power restrictions.  Mechanisms for energy
optimization in WSNs constitute then an important requirement. This
optimization targets not only the reduction of energy consumption of a
single sensor node but also the extension of the entire network
lifetime.  The sensor network lifetime is defined as the period during
which the routing fidelity and the sensing fidelity of the network are
guaranteed. Guaranteeing sensing fidelity means that any monitored
stimulus in the area will always be sensed by at least one sensor.
Routing fidelity means the existence of a path between any
sensor node and {\it at least} one base station. Our
goal is then to leverage node redundancy in WSNs to reduce and distribute
the computational and communication energy consumption of the network
between sensors.  We consider that the cooperative nature of sensors
offers significant opportunities to manage energy consumption.

Given the potentially large number of sensors in a WSN and their
limited resources, it is also crucial to deploy fully decentralized
solutions and to evenly spread the load over the network. Recent works
in Peer-to-Peer (P2P) systems can be successfully adapted to explore
new solutions for energy consumption distribution in sensor networks.

\noindent{\bf Contributions.}  This paper describes an adaptive and
fully decentralized topology management scheme, called SAND
(Self-organizing Active Node Density). SAND significantly extends
network lifetime by reducing nodes activity. 
The major contribution of SAND is its simple algorithm: 
each node takes local decision based on the observation of
its neighborhood in order to ensure the properties of routing and sensing
fidelity. Moreover, SAND as a whole converges towards the
expected properties. In order to manage energy consumption, we
leverage P2P's cooperation paradigms and explore local node
information and sensors resolution through neighborhood communication
only.  No node location information is required, and our approach is
independent of any wireless routing protocol.

 
For any configuration of the
energy-aware topology, we show that SAND guarantees that 
routing and sensing fidelity will be extended. In comparison with the
case where no topology management is applied, SAND considerably
extends network lifetime at the price of the slightly increasing paths
lengths from sensor to sink nodes.  Moreover, simulations of SAND
suggest that network lifetime and SAND's robustness increases
proportionally to node density.

\noindent{\bf Outline.} The rest of this paper describes and evaluates 
SAND. In Section~\ref{sec:foreword}, we discuss 
the advantages in performing power management of radios in wireless 
networks and describe our system model. 
We review related works in Section~\ref{sec:related}. In 
Section~\ref{sec:design}, we describe the SAND approach, analyse interesting 
properties, and 
discuss design issues.  
We present the performance results in Section~\ref{sec:perform}. Finally, in
Section~\ref{sec:conclusion}, we conclude this paper and discuss future 
works.
\section{Foreword}
\label{sec:foreword}

\subsection{Where does the energy go?}
\label{subsec:power_go}

\begin{table*}[t]
\caption{Energy state description} 
\label{tab:Table1}
 \begin{center}
\scriptsize{
 \begin{tabular}{|c|c|c|l|} \hline
  {\bf State}   & {\bf Energy consumption}      & {\bf Activity}  & {\bf Provided service}\\ \hline

{\it Sleep}         & very low  & none    & - periodically turns on the radio for 
					receiving control msg \\ \hline

{\it Sensor-only}   & low       & sensing & - sensing \\ 
                    &           &         & - periodically turns on the radio for 
					receiving/sending control msg\\ 
                    &           &         & - sending local data, when needed \\  \hline
					
{\it Gateway}       & high      & sensing & - sensing    \\ 
                    &           & routing & - receiving/sending  
					control msg\\ 
                    &           &         & - sending/forwarding  
					data \\ \hline

{\it Router-sensor} & high      & sensing & - sensing    \\ 
                    &           & routing & - receiving/sending 
					control msg\\ 
                    &           &         & - sending/forwarding 
					data \\ 
                    &           &         & - managing sensor      
					density     \\ \hline

  \end{tabular}}
 \end{center}
\vspace{-0.7cm}
\end{table*}

Power dissipation analyses of a sensor node in the literature show that 
wireless communication is a major energy consumer during system 
operation~\cite{raghunathan02,estrin99}. 
Results show that ({\it i}) the overhearing process increases power
consumption,\footnote{\scriptsize {In~\cite{tsiatsis01}, authors
show that for typical sensor network scenarios, around 65\% of all
packets received by a sensor node need to be forwarded to other
destinations.}} and ({\it ii}) energy optimizations must turn off the
radio and not simply reduce packet transmission and reception.  SAND,
therefore, incorporates power management into the communication
process. We optimize energy consumption by completely turning off
radio whenever possible, conserving energy both in {\it Idle} state
when no traffic exists and in overhearing due to data transfer.

We also take advantage of the sensor network density for 
energy savings. 
SAND also powers down sensor nodes that
are equivalent from a sensing perspective. 
In summary, SAND is a very simple topology management scheme, which
generates low communication overhead. SAND applies in the context of nodes 
that are able to:

\begin{CompactEnumerate}
\item turn off their radios for {\it communication power conservation}, 
while still maintaining connectivity between sensors and sinks, {\it i.e.}
the routing fidelity;
\item completely power down for {\it sensing power conservation}, while 
still ensuring a correct stimuli monitoring, {\it i.e.} the sensing fidelity.
\end{CompactEnumerate}

\subsection{System model}
\label{subsec:system}

We consider a distributed system consisting of a finite set of $n$
sensors and $s$ sinks, each uniquely identified. Nodes (sensors or
sinks) are spread into a delimited area. We consider that both sensor
and sink nodes form a connected network or that all network partitions
contain at least one sink node.  Nodes may crash and recover during
the network lifetime, we do not consider Byzantine failure. We assume
that, although there is no need for synchronized clocks, there is an
upper bound on the drift rate of local clocks. We define $\delta$ as
an upper bound on the transmission time of a message between two
neighbors. We consider that sensing and communication ranges are equal.

SAND does not require any node location information.  We explore
density determination by assuming that nodes communicate only by
$1$-hop broadcast toward nodes in their neighborhood, corresponding to
their transmission range.  Finally, a node may be in one of the four
following energy states: sleep, sensor-only, router-sensor, and
gateway. Each state corresponds to a defined node activity and one or
several services provided to the network (see Table~\ref{tab:Table1}).
SAND guides nodes' energy state switching independently of the
underlying wireless routing protocol. Interactions between SAND and
the routing protocol are part of a future work.

\section{Related works}
\label{sec:related}

Topology management techniques, called SPAN~\cite{chen01} and
GAF~\cite{xu01}, have similar goals to those of SAND: they trade
network density for energy savings while preserving the forwarding
capacity of the network.  Nevertheless, they do not exploit the
absence of traffic in the active sensing state.  Besides the energy
consumption reduction in the sensing state, the
STEM~\cite{schurgers02} scheme also proposes to ensure a satisfactory
latency for transitioning to the routing state.  Authors then suggest
integrating STEM with scheme such as GAF and SPAN.  SAND coordinates
the radio sleep and the wakeup switch for sensing and/or routing
states in a unique scheme, being no integration
necessary.\footnote{\scriptsize {The insurance of a satisfactory
latency for transitions of state is not our focus here.}}
Several algorithms have been proposed for exploiting the area coverage
problem in sensor networks~\cite{carle05,tian02,simplot_percom06}.
Contrary to SAND, all these solutions assume that the sensors are
aware of their own positions. Authors in~\cite{tian02} do not address
the connectivity problem and require every sensor to know all their
neighbors positions before making its monitoring
decisions~\cite{tian02}. 
The proposal presented by 
Carle and Simplot-Ryl~\cite{carle05} specifies
that each sensor needs to
construct its subset of relays and broadcast it to its neighbors,
which generates higher communication overhead than SAND. The solution
proposed by A. Gallais {\it et al.}~\cite{simplot_percom06} relies on
low communication overhead and does not need any neighborhood
discovery. Nevertheless, nodes have to memorize the positions and the
decisions of their neighbors in order to make appropriate monitoring
decisions. In SAND, however, nodes need a small amount of information
({\it e.g.} only a partial neighborhood discovery) and perform a low
processing overhead to take their activity decisions.
Clustering algorithms can be also used to select router nodes. 
As an example, 
M. Chatterjee {\it et al.} in~\cite{chatterjee02} use 
quite sophisticated concepts and heuristics 
to decide which nodes should be cluster heads. SAND instead, has as a major 
contribution the simplicity for router election procedure.

\section{SAND design rationale}
\label{sec:design}

A major issue in sensor-based applications is the diffusion of the 
sensed data to a specific entity that can store and 
process it~-- the sink node. Mechanisms that ensure 
this diffusion are important components in WSNs and
have been the subject of many researches in the literature~\cite{alkaraki04}. 
In this context, the cooperation between energy conserving and 
information diffusion robustness is crucial. 

\begin{figure}[htbp]
  \centering
  \epsfig{file=./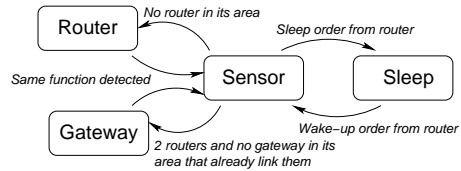, width=6cm}\vspace{-0.15cm}
  \caption{Energy state transition diagram in SAND.}
  \label{fig:state_diag}
\end{figure}

We take advantage of information provided by routing layer and/or
information related to the envisaged application to determine when the
radio is not needed.  We consider that during network lifetime, sensor
nodes can alternate their energy consumption between four states (see
Fig.~\ref{fig:state_diag}): (1) {\it sleep}, where all hardware
components are powered off, (2) {\it sensor-only}, where only sensor
and some pre-processing circuity are powered on, and (3) {\it gateway}
and {\it router-sensor}, where all hardware components are powered
on.\footnote{\scriptsize {Router-sensor and gateway nodes can also
optimize local energy consumption by changing the power state of their
memory and/or processor.}}  SAND performs then the energy-aware
topology management by controlling the routing and the sensing
fidelity during the network lifetime.

\subsection{Forwarding nodes distribution}
\label{subsec:routing}

The forwarding nodes distribution is performed in two consecutive
phases. The first one distributes nodes in router-sensor state
uniformly in the network. The second one consists in connecting close
router-sensors by selecting nodes to switch to the gateway
state. These two phases are based on the SONDe's
principle~\cite{lemerrer06}: if a node does not detect any neighbor in
each one of these two states then it turns itself into the missing
state.

\ifcode 
\begin{algorithm}[!ht]
  \caption{\textit{The SAND Algorithm}}\label{alg:sand}
  \begin{algorithmic}[1]
  {\scriptsize
	  \Part{Phase 1}{
				\If{$\neg\lit{router-detection}()$}																\label{line:nrdetect}
		    	\State $\ms{status} \gets \lit{router-sensor}$									\label{line:rsensor}
		    \ElsIf{$r \gets \lit{router-detection}()$}												\label{line:rdetect}
		    	\If{$r.\ms{ts} > \ms{ts}$}
		      	\State $\ms{status} \gets \lit{sensor-only}$									\label{line:back}
		      \EndIf
		  	\EndIf
		}\EndPart
		
		\Statex
		
		\Part{Phase 2}{																
		  	\If{$|\ms{routers} \gets \lit{detected-routers}()| \geq 2$}				\label{line:phase2-start}
		  	  \If{$\neg (g \gets \lit{gateway-detection}()) \vee (g.\ms{ts} < \ms{ts})$ \\
		  	  	\hspace*{4em} $\wedge \neg (\ms{routers} \subseteq \ms{g.routers})$}
			      	\State $\ms{status} \gets \lit{gateway}$	
			    \Else ~$\ms{status} \gets \lit{sensor-only}$										\label{line:phase2-sup}
			    \EndIf
				\ElsIf{$\ms{status} = \lit{gateway}$}										
					\State $\ms{status} \gets \lit{sensor-only}$
				\EndIf																														\label{line:phase2-end}
		}\EndPart
  }
  \end{algorithmic}
\end{algorithm}
\fi

\noindent {\bf Algorithm overview.}
The SAND algorithm is presented in Algorithm~\ref{alg:sand}.  
Each period of time $\Delta
> 2\delta$, the
router-sensor nodes send {\tt Hello} messages containing
their current state and a timestamp $\ms{ts}$. 
Observe that we
do not focus on specifying the message retransmission in case of
collision; we rather assume that this is implemented at a lower layer.
The timestamp $\ms{ts}$ of node $i$ contains: the time $i$ spent 
in its current state and its identifier (tie breaker).
Each sensor-only node in the
network checks if there is a router-sensor node in its immediate
neighborhood, by listening during a timeout $T_{on}$ with $T_{on} >
\Delta + \delta$.
If no router-sensor node is detected, then the sensor-only node becomes a
router-sensor node (Lines~\ref{line:nrdetect}-\ref{line:rsensor}).
If a router-sensor node detects the presence of another
router-sensor node in its transmission range with a 
higher timestamp than its own, it comes back to the 
sensor-only state (Lines~\ref{line:rdetect}-\ref{line:back}). 
While the first phase guarantees a good distribution of router-sensor, the 
second phase (Lines~\ref{line:phase2-start}-\ref{line:phase2-end}) elects
gateways to connect them. 
A sensor-only node becomes gateway if it
detects two routers and no gateway with a lower timestamp than its own
and that already makes a link path between those two routers. A
gateway node informs with a period $\Delta$ about the routers it
sees. If a gateway detects another gateway which connects the same
router-sensor and with a higher timestamp than its own, it comes back
to the sensor-only state (Line~\ref{line:phase2-sup}).

\noindent {\bf Independent-dominating set convergence.} In the
following, we show that SAND presents two interesting properties
borrowed from graph theory.  First, to help routing to a sink node,
each node is in the neighborhood of a router-sensor node or is itself
a router-sensor node.  Second, to prevent energy consumption waste, a
subset of nodes becomes router-sensor nodes.  There are solutions for
related problems known as \emph{vertex cover} and \emph{minimal
dominating set}, which guarantees activated sensors to form such a
set.

The minimality problem of the aforementioned solutions might involve
many state changes each time a router-sensor crashes.  Here, we rather
ensure that the sensor-router nodes satisfy both domination and
independence properties.  This helps at reducing, yet making it
sufficient, the number of router-sensor nodes, while this number has
not to be minimal.  Roughly speaking, the router-sensor nodes satisfy
\textit{(i)}~\emph{dominance}: all nodes are either router-sensor or a
neighbor of a router-sensor and \textit{(ii)}~\emph{independence}: no
router-sensor node is a neighbor of another router-sensor node.

The following proof shows that the algorithm converges to a
configuration verifying both properties under system stabilization.
Let a \emph{real-router}
(resp. \emph{real-sensor}) denote a router-sensor node
(resp. sensor-only node) that became router-sensor (resp. sensor-only)
$T_{on}$ time ago, and which did not revert its state since then.  
%

\begin{theorem}
	The SAND algorithm converges towards an independent-dominating set.
\end{theorem}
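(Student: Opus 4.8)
The plan is to prove convergence to an independent-dominating set by separately establishing the two defining properties—dominance and independence—on the set of real-router nodes, and then arguing that the system reaches a stable configuration in which no further state changes occur. Throughout, I would work with the notion of real-router and real-sensor introduced just before the theorem, so that transient states during a round of updates do not obscure the argument. Let me think carefully about how the two properties arise from Phase~1 of the algorithm.

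First I would establish dominance. The key observation is the contrapositive of the rule on Lines~\ref{line:nrdetect}--\ref{line:rsensor}: a node can remain a real-sensor only if it detects a router-sensor in its neighborhood. Concretely, suppose for contradiction that after stabilization some node $v$ is neither a router-sensor nor adjacent to one. Since $v$ listens for a full timeout $T_{on} > \Delta + \delta$, and router-sensors broadcast every period $\Delta$, node $v$ is guaranteed to hear any router-sensor neighbor within its listening window; detecting none, it executes Line~\ref{line:rsensor} and becomes a router-sensor itself. This contradicts the assumption, so every node is dominated. The role of the bound $T_{on} > \Delta + \delta$ is exactly to ensure that one full {\tt Hello} broadcast plus its transmission delay $\delta$ fits inside the listening window, so detection cannot be missed.

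Next I would establish independence, which is enforced by the tie-breaking rule on Lines~\ref{line:rdetect}--\ref{line:back}. Suppose two adjacent nodes are both real-routers. Each carries a timestamp $\ms{ts}$ encoding time-in-state together with its identifier as a tie-breaker, so the timestamps are totally ordered and distinct. The node with the higher timestamp is detected by its neighbor, which then reverts to sensor-only by Line~\ref{line:back}; hence it cannot have remained a router-sensor for $T_{on}$ time without reverting, contradicting that it is a real-router. Therefore no two real-routers are adjacent, giving independence. Combining the two parts, the real-router set is simultaneously dominating and independent.

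The main obstacle, and the step I would spend the most care on, is the termination/stabilization argument: I have so far shown that any fixed point of the dynamics is an independent-dominating set, but I must argue the system actually reaches such a fixed point rather than oscillating. Here I would use the total order on timestamps as a progress measure. Intuitively, when two neighboring routers conflict, the one that has been a router longest (largest time-in-state, breaking ties by identifier) survives and the other demotes; because demotions always favor the node that committed earlier, and identifiers are unique, cyclic reversion among a fixed neighborhood cannot persist once the network stabilizes—no new crashes or recoveries occur and message delays are bounded by $\delta$. I would formalize this by showing that after the last topological change, within a bounded number of $\Delta$-rounds every node's state stops changing, since each potential conflict is resolved monotonically in favor of the earlier-committed, lexicographically-smaller-identifier node. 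Once no state changes remain, the configuration is a fixed point and hence, by the two arguments above, an independent-dominating set.
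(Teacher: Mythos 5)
Your proposal is correct and sits at essentially the same level of rigor as the paper's own proof (which is explicitly labeled a sketch), but it takes a genuinely different route. You characterize the fixed points of the dynamics---any stabilized configuration is dominating, by the promotion rule together with $T_{on} > \Delta + \delta$, and independent, by the timestamp demotion rule---and then argue separately that a fixed point is actually reached, using the total order on timestamps as a progress measure. The paper instead follows the classical self-stabilization template: first, \emph{closure}: once the real-routers form an independent dominating set, every real-sensor keeps hearing a real-router in each $T_{on}$ window and the configuration is invariant; second, a \emph{safety invariant}: independence of real-routers is never violated at \emph{any} point of the execution (not just in the limit), since nodes start as sensor-only and near-simultaneous promotions of neighbors are resolved within bounded message delay before either node qualifies as a real-router; third, \emph{progress}: the number of neighborhoods where dominance fails eventually decreases, because the competing new routers in a gap exchange timestamps within a $T_{on}$ delay and exactly one survives. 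Each decomposition buys something: the paper's safety invariant is stronger than your fixed-point independence (it also constrains the transient, pre-convergence behavior, which matters for energy waste), whereas your split cleanly isolates where the real difficulty lies---termination---which you rightly identify as the crux. In both treatments that step remains informal: neither explicitly handles the cascade in which a demoted router opens a fresh dominance gap elsewhere, triggering new promotions; the fact that closes the loop, implicit in both, is that in a stable network a real-router can never be demoted (no neighbor can subsequently promote while receiving its periodic {\tt Hello} messages), so the set of settled real-routers grows monotonically and violations can only shrink. One cosmetic remark: the paper's prose says the \emph{lowest} timestamp wins, while its pseudocode---which you follow---demotes the node that detects a \emph{higher} timestamp; since the argument needs only that the total order selects a unique winner per conflict, this internal discrepancy of the paper is immaterial to your proof.
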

\begin{proof}[sketch]
 The proof is divided in three parts.  
 First we show that the independent-dominating property is an invariant. 
 Consider the communication subgraph containing only real-routers
 and real-sensors whose real-routers form an independent dominating
 set when the system stabilizes. 
 That is, any
 real-sensor node receives a message from a real-router in each
 $T_{on}$ period of time and does not become a router.  Similarly the
 real-routers stay in their state.

 Second we show that independence can never be violated.  Observe that
 nodes are initially in their sensor-only state and thus can not
 violate independence by awakening.  Before stabilization, real-router
 can crash but independence is never violated since message delay
 remains bounded.

 Finally, we show that the number of locations where the 
 dominance is violated eventually decreases, let ${\cal G}$ be any 
 communication subgraph whose set of real-routers is not dominant.  After some
 time, some nodes of ${\cal G}$ become routers.  This might happen in
 the meantime at different places in the same neighborhood.  After a
 $T_{on}$ delay, messages are exchanged between routers and
 the one with the lowest $\ms{ts}$ is chosen to become exclusively the
 real-router of the neighborhood.  Other router nodes, so as
 sensor-only nodes, become real-sensor. 
\end{proof}

Experiment described in Fig~\ref{fig:screenshots}(a) confirms our
theoretical analysis and shows the result of our router selection
after three  simulation rounds: this simulation leads to 50
router-sensor nodes (black circles) for 450 sensor-only nodes (gray
circles).
Provided these properties, we claim that a sufficient sensor nodes
density will provide enough gateway candidates to ensure the
connection between two close router-sensors. Consequently, SAND does
not determine the optimal minimum number of forwarding nodes to
maintain sink connectivities, ensuring then, that there are several paths between any
node and at least one sink. This redundancy makes the routing fidelity
more resilient to failures. Our claim is satisfied by the experiments
obtained in Fig.~\ref{fig:screenshots}(b) commented hereafter.
Specifying a protocol to obtain a path among all router-sensors from a
dominating set is left as an open work.

\subsection{Sensing guarantee}
\label{subsec:sensing}

Fidelity in stimuli sensing can be ensured only by router-sensor and
gateway nodes, because they are uniformly distributed in the network
and their sensing range is equal to their transmission
range. Nevertheless, for reliability issues and for the cases where a
specific sensor node density should be ensured, SAND allows the
control of the sensor-only nodes resolution in each target area, while
performing the sensing load distribution among nodes.

To this end, router-sensor nodes are in charge of selecting nodes to
switch to sleep or sensor-only state. This selection depends on the
envisaged reliability degree of the monitored area. Thus, nodes that
are for a long time in the sensor-only state will be selected to
switch to sleep state, and vice versa.  Each sensor-only node
periodically turns on its radio and sends {\tt Hello} messages
containing its current state and its estimated lifetime
($el$).\footnote{\scriptsize {$el$ corresponds to the expected
remaining node energy and is set by assuming that nodes will have a
start energy level and they will consume energy (related to their
state) until they die.}} Nodes in sleep state also periodically turn
on their radio, but never send messages.\footnote{\scriptsize {We
assume their estimated lifetime is the same of the last $el$ sent when
they were in the sensor-only state.}}

Upon reception of sensor-only $el$ of sensor-only neighbors,
router-sensor nodes compute the average (by considering the $el$ of
the last switched-to-sleep nodes too) and the standard deviation (by
considering only the sensor-only nodes that have lower $el$ than the
resulting average). Finally, router-sensor nodes send $1$-hop ``{\tt
switch-to-sleep}'' order messages to sensor-only nodes that have their
$el$ level lower than the resulting standard deviation.  The switch
of sensor-only to sleep state is performed as soon as another
sensor-only node appears in the monitored area.
Sensor-only nodes also turn on their radio if any local collected data
has to be transmitted to the sink ({\it e.g.}, for full memory
resource).

Router-sensor nodes also control the switch from 
sleep to sensor-only states, by sending ``{\tt switch-to-sensor}'' order 
messages.\footnote{\scriptsize {In the 
case of the reception of two contrary order 
messages, the priority is given to the ``{\tt switch-to-sensor}'' 
messages.}}
In this case, nodes in the sleep state switch to the sensor-only state
if they have (1) their radio turned on, and (2) their $el$ level is
higher than the computed average. In addition, like sensor-only nodes,
nodes in the sleep state with the radio turned on, verify their
router-sensor node connectivity.  If no {\tt Hello} message from a
router-sensor node is received, a sleep node becomes a sensor-only
node and then, based on the SAND bootstrap procedures, can switch to
the router-sensor or gateway state.  Fig.~\ref{fig:screenshots}(b)
shows the result of the SAND energy state distribution at a random
point in time.
The figure presents 68 router-sensor nodes (black circles), 47 gateway
nodes (squares) and 385 sensor-only and sleep nodes (gray circles).
The line connecting points represents the connectivity among
forwarding nodes.  The network connection among forwarding, sleep, and
sensor-only nodes is not represented here.
This shows 
that the generated forwarding topology is connected.

\begin{figure}[t]
   \centerline{
	 		\resizebox{1.45in}{!}{
         \includegraphics[scale=0.3,clip=true]{./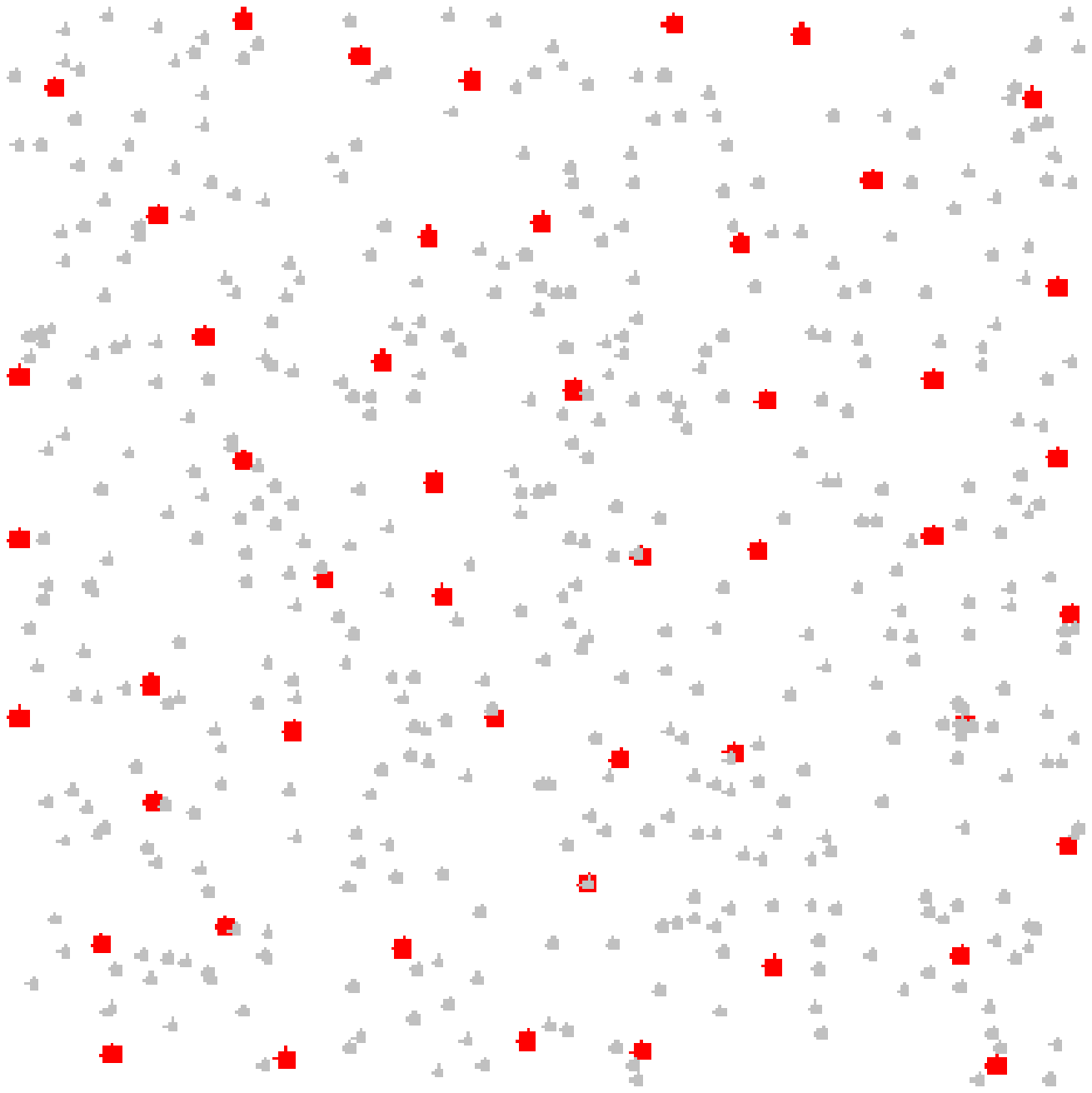}
       }\hspace{0.5cm}
       \resizebox{1.6in}{!}{
         \includegraphics[scale=0.3,clip=true]{./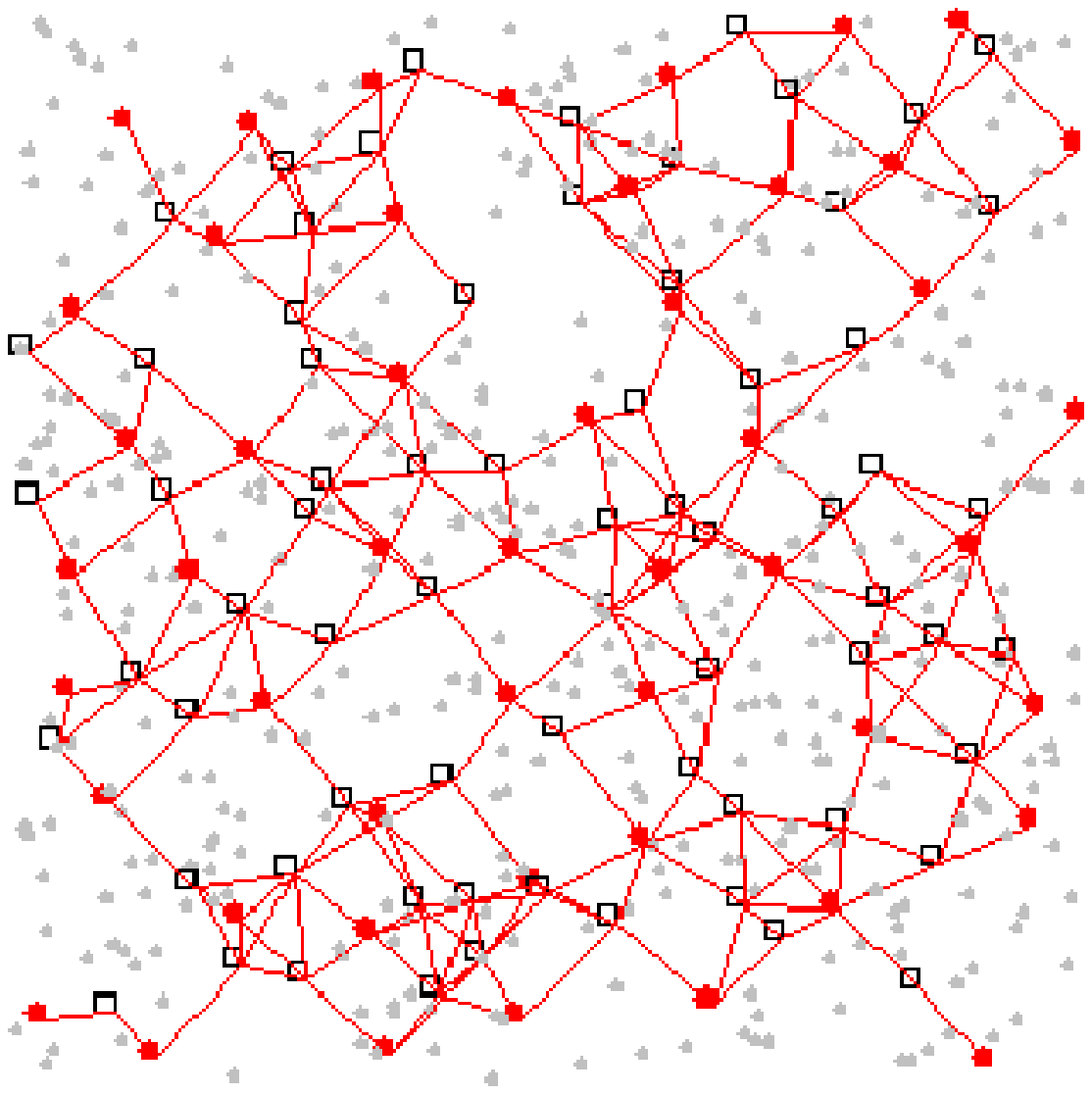}
       }}
   \vspace{-0.3cm}
\caption{A 300$\times$300 $m$ area network with an average  
   of 20 neighbors per node representing: (a)~the router selection, 
   (b)~the energy state distribution.}
   \label{fig:screenshots}
\vspace{-0.5cm}
\end{figure}


\subsection{Discussion}
\label{subsec:discussion}
We now discuss some design choices.

{\bf Outlining parameters:} The radio of sleep and sensor-only nodes
is periodically turned on for fidelity verifications at intervals
$T_{off}$ and stay on for at least a timeout value $T_{on}$.  The
range of $T_{off}$ can be influenced by the time that nodes have been
conserving their energy during the sleep state. Optimizations of
these parameters are under evaluation.

{\bf Outlining advantages:} Contrary to some existing area coverage
solutions, SAND does not require nodes position information or
geographic coverage computing for ensuring connectivity. Therefore SAND
has a low computing overhead.  Moreover, since nodes do not need to
perform a complete neighborhood discovery to take their state
decisions, SAND has also as advantage a low communication overhead.


{\bf What if disconnections occur:} Considering the poor failure
resilience of sensor nodes and the SAND's guarantee of  sink
connectivity, it may occur that failed nodes cause temporary
disconnections. This can also occurs if {\tt Hello} messages from
router-sensor nodes are lost, causing gateway nodes to become
sensor-only nodes.  Nevertheless, since SAND allows disconnection
detection and restores connectivity by state switching, we consider
that disconnection periods will not be long.  Nodes may store their
messages and wait for connectivity restoration. In the case of
router-sensor {\tt Hello} message losses, gateway nodes can be instrumented
to wait at least a timeout value of $2\times T_{on}$ before deciding
to switch to sensor-only state. This kind of improvements is subject
to future works.

\section{Performance evaluation}
\label{sec:perform}

We have conducted a number of simulation experiments over  
800 rounds, using a simulator consisting of the SAND
engine and a network emulation environment. We experiment
on a large scale static network with 4,700 nodes. 
Nodes are uniformly distributed over a square area of 700 $m$ on a 
side and have a transmission range of 37 $m$. 
The dynamism of the created topology is imposed by fail nodes only.

The simulator is a discrete time-based engine, in
which actions are performed per round of simulation. We have set
$T_{off}$ to be 2 rounds. We consider that nodes send order or {\tt
Hello} messages at each round. We also consider that sensor-only nodes
send {\tt Hello} messages at each $\Delta$ period, where $\Delta$ and
$T_{on}$ are initially set to 1 round.

We set all nodes with an 
estimated lifetime $el$ of 100,000 unities ($u$). 
Our energy consumption model 
is based on the power consumption of the sensor node described 
in~\cite{sinha01}. We consider thus, the costs described in 
Table~\ref{tab:Table2}. The network lifetime column shows
approximated values for the estimate maximum lifetime corresponding to
nodes in each state.

We uniformly generated 1,000 stimuli over the network.
Unless otherwise specified, we set the reliability
degree of each monitored stimulus to 5 nodes.  No wireless routing
protocol is implemented.  Thus, to evaluate the routing fidelity of
our approach, we simply verify if there is a path between a source
node ({\it i.e.}, a sensor-only, a gateway, or a router-sensor node)
and at least one sink node.  Path verification is performed each time
a stimulus is generated.  Unless otherwise specified, sink nodes are
uniformly distributed over the network and correspond to 1\% of 
all nodes.
To examine the impact of the neighborhood
density in the network lifetime, we vary the number of nodes from
1,000, to 1,900, 2,800, and 4,700 (corresponding to approximately 10,
20, 30, and 50 neighbors in range, respectively), while keeping
constant the area and the transmission range of nodes.

\vspace{-0.5cm}
\begin{table}[htbp]
\caption{Energy consumption} 
\label{tab:Table2}
\scriptsize{
 \begin{center}
 \begin{tabular}{|c|c|c|c|} \hline
  {\bf Node}       & {\bf Radio}   & {\bf Energy}      & {\bf Estimated} \\     
  {\bf state}      & {\bf state}   & {\bf consumption}   & {\bf lifetime}   \\ \hline 
{\it Sleep}         & radio OFF     & $\pm$ 10$u$              & $\cong$3300           \\ 
                    & radio ON      & $\pm$ 70$u$              & $rounds$           \\ \hline
                                                                           
{\it Sensor-only}   & radio OFF     & $\pm$ 200$u$             & $\cong$450         \\
                    & radio ON      & $\pm$ 270$u$            & $rounds$          \\ \hline
					                                                      
{\it Gateway} and      & radio ON    & $\pm$ 1040$u$           & $\cong$96              \\        
{\it Router-sensor}    &             &                  &  $rounds$               \\ \hline

  \end{tabular}
 \end{center}}
\end{table}
\vspace{-0.45cm}

We evaluate SAND along the following metrics:
\textit{(i)}~the network lifetime and the energy conservation; 
\textit{(ii)}~the forwarding robustness;
\textit{(iii)}~the sensing fidelity preservation; and
\textit{(iv)}~the effects of network density.

\noindent {\bf Experimental results.}
%
One of the SAND goals it
to preserve network routing fidelity.  We consider that if paths in
the gateway/router-sensor nodes backbone exist, there are similar
non-conflicting paths in the underlying network.
Fig.~\ref{fig:suc_pck_sink} evaluates the robustness of SAND in
ensuring a sink connectivity in a 1,900-nodes network. For this
purpose, we vary the sink node density from 0.5\%, 1\%, and 1.5\% of
the total number of nodes.  As expected, as sink resolution decreases,
more active nodes are needed to ensure sink connectivities. This
results in a decrease of the network lifetime and the faster energy
exhaustion of nodes.  Nevertheless, compared to results obtained in
Without-SAND topology with 1\% of sink density, SAND still extends the
number of forwarding paths of network (95\% of paths are ensured for a
double of time than in Without-SAND), even with a lower sink
resolution of 0.5\%.

Additionally, for different neighborhood densities, 
we show that despite using fewer forwarding nodes, SAND 
does not significantly increase the number of hops of paths to sink nodes.
Table~\ref{tab:Table3} shows the average ($\tau$) and the standard 
deviation ($\sigma$) path length of SAND 
and Without-SAND, calculated while 90\% of the generated stimuli are sensed and 
corrected forwarded. SAND constructs forwarding paths with 
only a slightly higher number of hops, on average 18\% of more 
hops. 

\vspace{-0.5cm}
\begin{table}[htbp]
\caption{Path length results.} 
\label{tab:Table3}
 \begin{center}
\scriptsize{
 \begin{tabular}{|c|c|c|c|c|c|} \cline{3-6}
\multicolumn{2}{c|}{} &\multicolumn{2}{|c|}{{\bf SAND}} &\multicolumn{2}{|c|}{{\bf W/o-SAND}} \\ \hline
{\bf Number} & {\bf Neighborhood} &         &          &          &   \\ 
{\bf of nodes } & {\bf density}   & $\tau$  & $\sigma$  & $\tau$  & $\sigma$ \\ \hline
1,000    &   10          & 3.146   & 1.844   & 2.566   & 1.843      \\ \hline
1,900    &   20          & 3.408   & 2.267   & 2.762   & 2.047           \\ \hline
2,800    &   30         &  3.043   & 2.086   & 2.461   & 2.057     \\ \hline
4,700    &   50         &  2.486   & 2.046   & 1.961   & 2.114          \\ \hline
  \end{tabular}}
 \end{center}
\end{table}
\vspace{-0.45cm}

Fig.~\ref{fig:energy_remain} shows the average of energy remaining at
each node after 40 simulation rounds under different neighborhood
densities.  In this simulation round, all nodes in the network are
still alive. The figure also compares the energy conservation resulted
in SAND with the case where no power management is performed. We
observe that SAND provides a considerable amount of energy saving over
Without-SAND. This is because all nodes in Without-SAND topology have
the same energy consumption as router-sensor nodes. In SAND, however,
nodes switch when ever possible to sleep or sensor-only states, as a
few forwarding nodes are present in each transmission range. We also
check that the energy saving increases proportionally to the
neighborhood density. This is due to the fact that as the density
increases, a lower fraction of alive nodes is composed of forwarding
nodes (the highest consumer of energy state), while a higher fraction
are in the sleep state (see Fig.~\ref{fig:active_alive_nodes}).

\begin{figure*}[!ht]
  \begin{center}
    \subfigure[]
    {
      \label{fig:suc_pck_sink}
      \epsfxsize=5.5cm
	  \leavevmode\epsfbox{./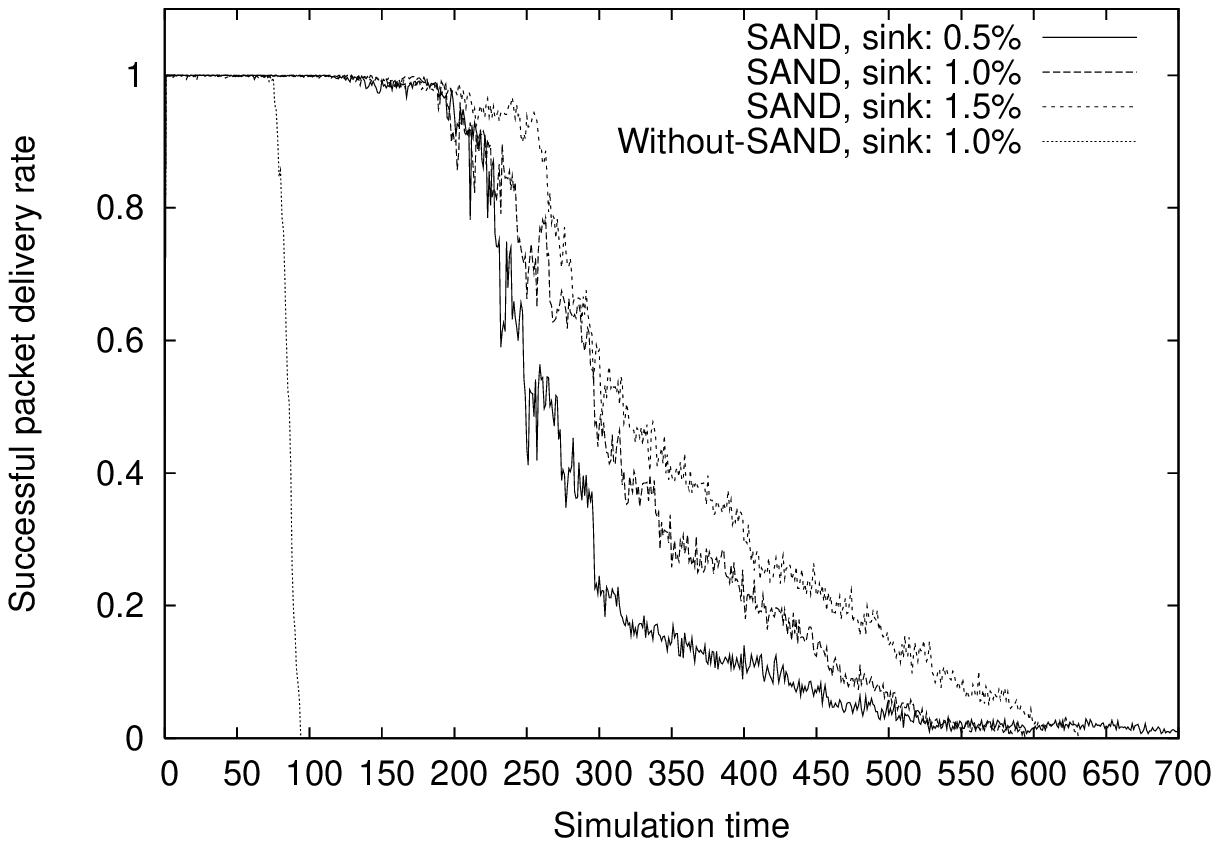}
    }
    \subfigure[]
    {
      \label{fig:energy_remain}
      \epsfxsize=5.5cm 
      \leavevmode\epsfbox{./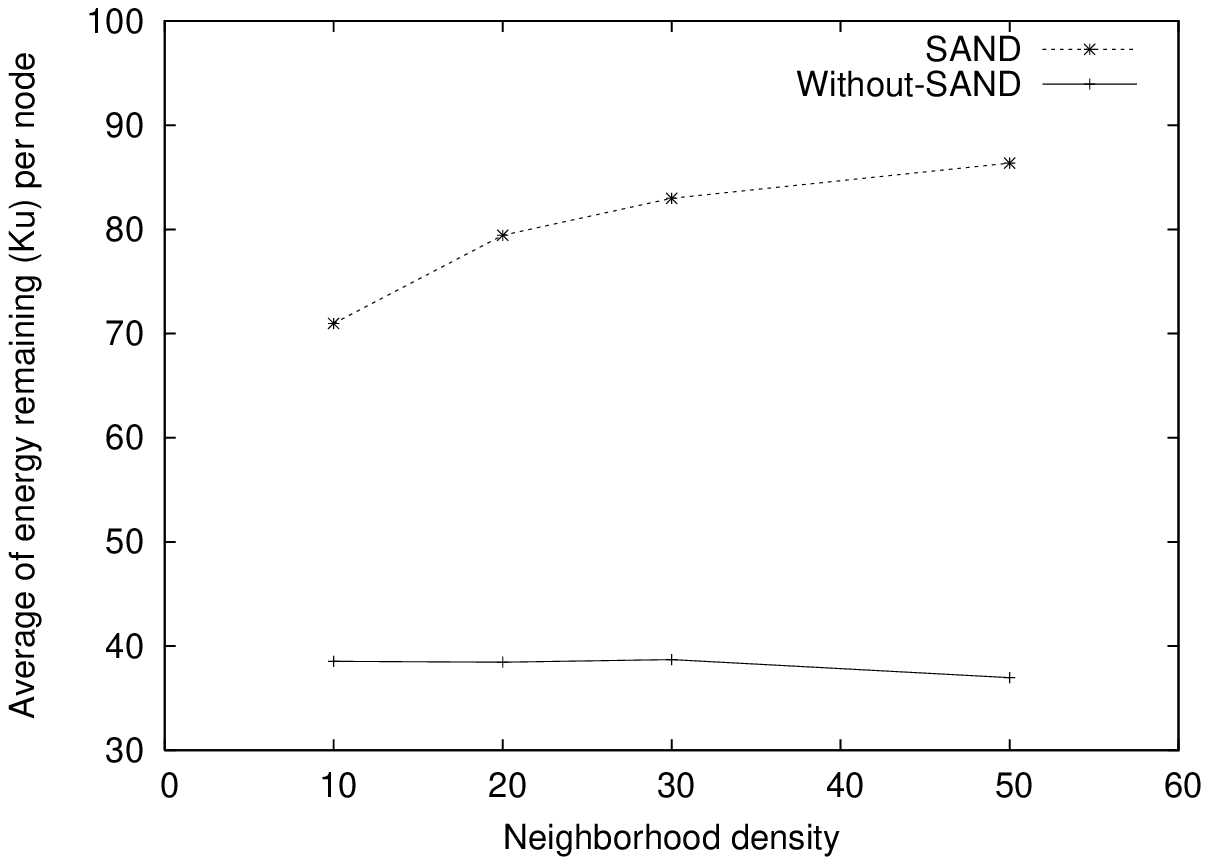}
    }
    \subfigure[]
    {
      \label{fig:succs_stimuli}
      \epsfxsize=5.5cm
      \leavevmode\epsfbox{./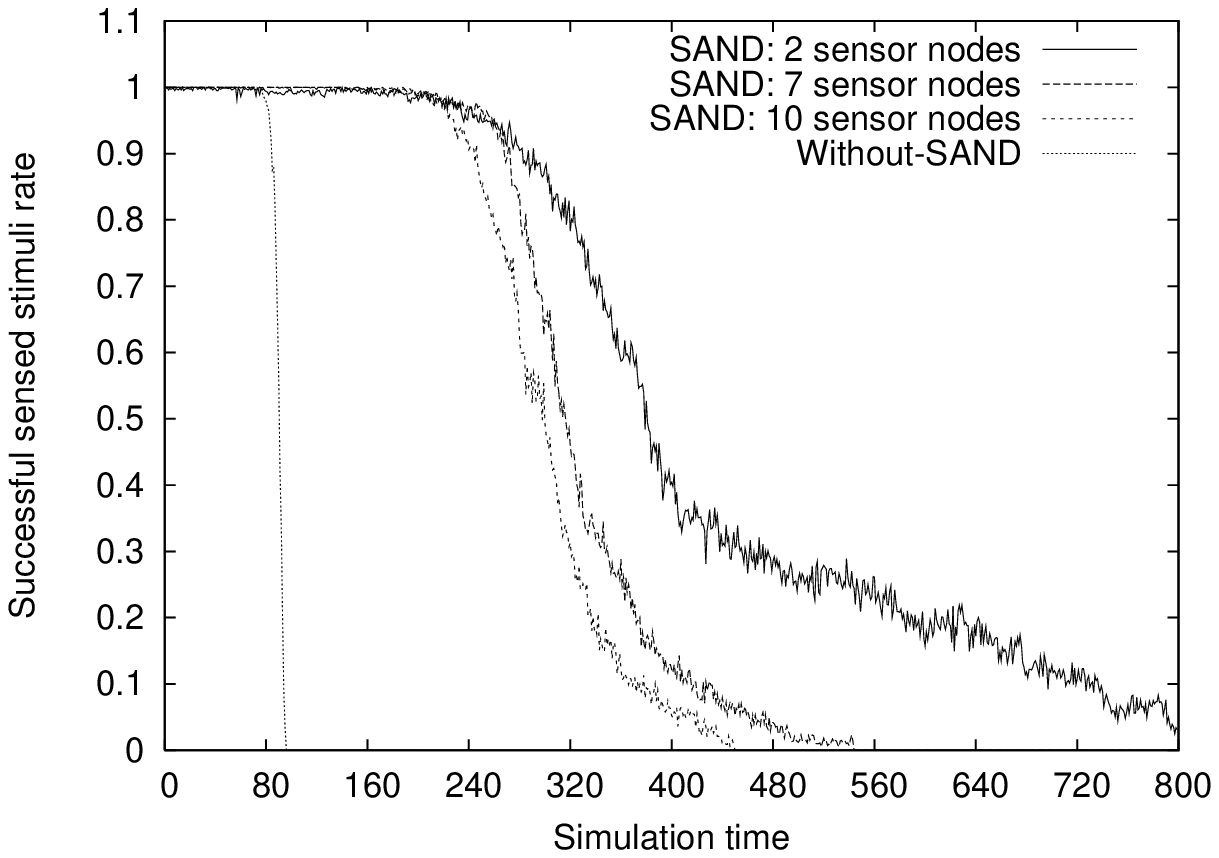}
    }\vspace{-0.25cm}
    \subfigure[]
    {
      \label{fig:pck_density_multisink}
      \epsfxsize=5.5cm
      \leavevmode\epsfbox{./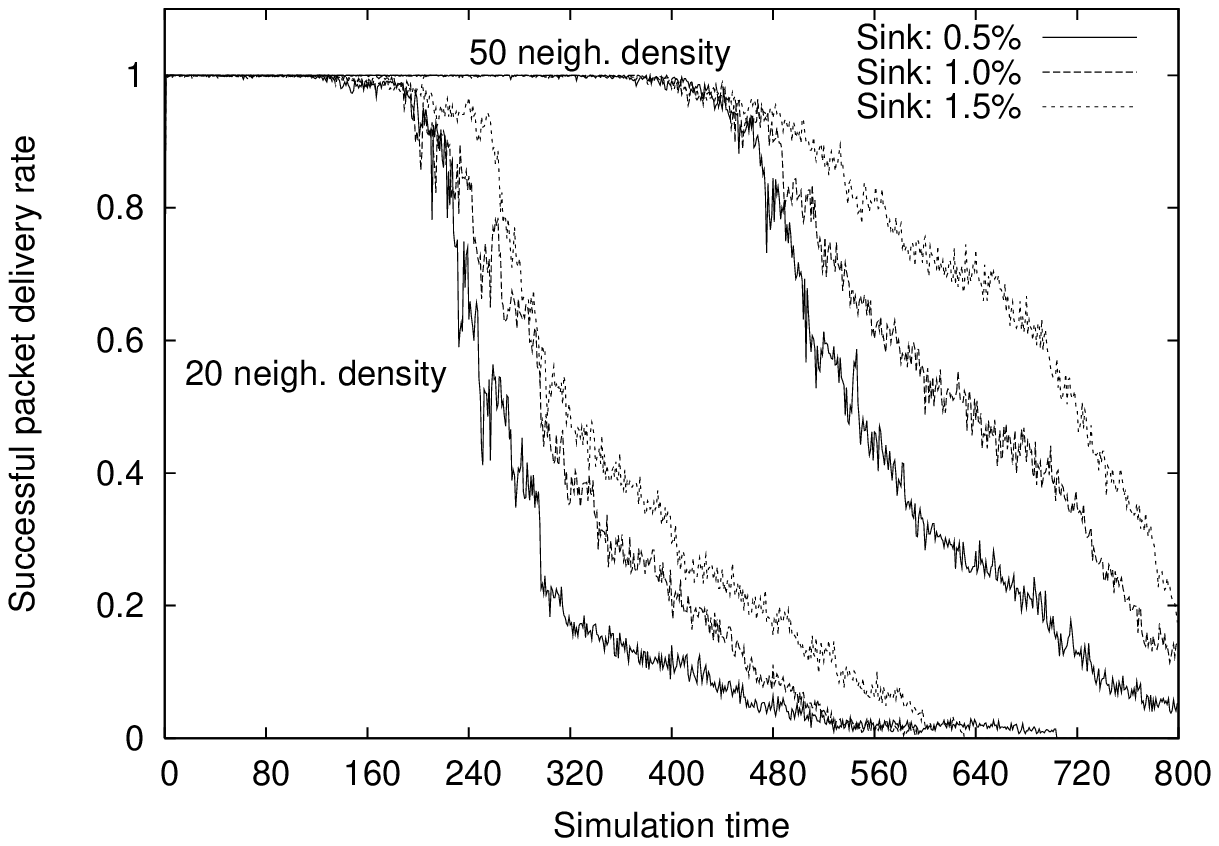}
    }
    \subfigure[]
    {
      \label{fig:active_alive_nodes}
      \epsfxsize=5.5cm
      \leavevmode\epsfbox{./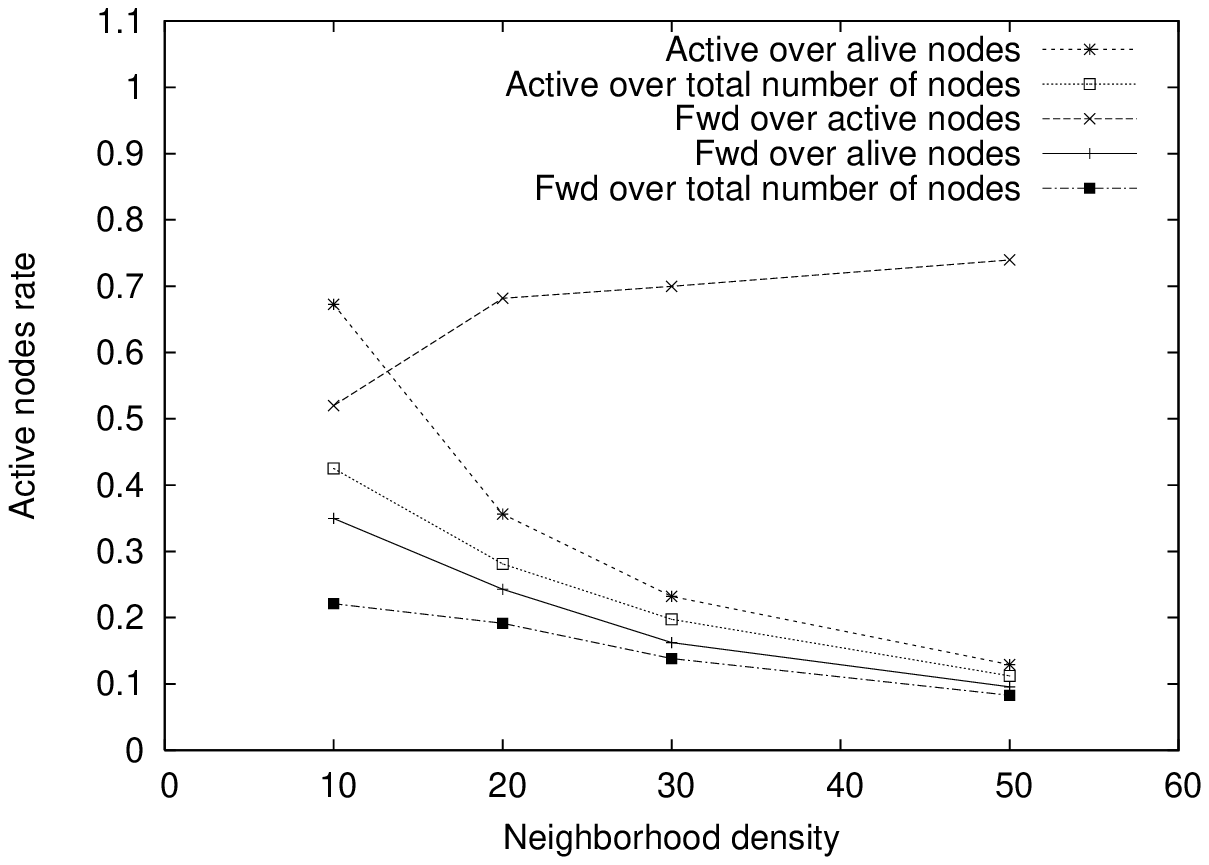}
    } \vspace{-0.4cm}
   \caption{(a) The sink connectivity. 
   (b) Average of energy remaining after 40 rounds. 
   (c) The stimuli sensing fidelity.
   (d) Successful forwarding paths.
   (e) Fraction of active and forwarding nodes.}
    \label{fig:graph}
  \end{center}
\vspace{-0.7cm}
\end{figure*}

Fig.~\ref{fig:succs_stimuli}
evaluates the robustness of SAND with respect to sensing fidelity, in
a 1,900-nodes network. We vary the required sensor node density in the
network from 2, 7, and 10 nodes, and show the rate of success sensed
stimuli for each density. Compared to the results obtained in
Without-SAND topology (where all nodes are sensing for all the
simulation time), SAND extends the number of sensed stimuli over the
network lifetime, even when a lower sensor node resolution of 2 is
used per target area. SAND ensures for a triple of time, 95\% of
sensing fidelity than Without-SAND. As expected, the increase of the
number of required sensor density, impacts the energy consumption of
nodes and consequently, decreases the network lifetime.

Fig~\ref{fig:pck_density_multisink} shows the results of the
performance of SAND in ensuring sink connectivity for different
neighborhood densities (for 20 and 50 neighbors in range) and sink
resolution (for 0.5\%, 1\%, and 1.5\% of the total number of nodes).
As expected, SAND is less impacted by sink resolution when the
neighborhood density increases. This is because the increase of
neighborhood density, increases the number of forwarding candidates,
and consequently, increases the probability of finding paths to sink
nodes.

Fig~\ref{fig:active_alive_nodes} shows the fraction of active and
forwarding nodes after 120 rounds of simulation (among many
experiments), under different neighborhood densities. Here, we compare
the fraction of active and forwarding nodes over the total number of
nodes in the network and over the total number of alive nodes. We show
that the potential of saving energy of SAND depends on the node
density, since the fraction of active and forwarding nodes depend on
the number of nodes per radio coverage area. We can easily verify that
the higher the node density, the lower the percentage of active nodes
in the network. Also the rate of forwarding nodes over the number of
active nodes decreases as the neighborhood density increase. On the
other hand, despite the increase of the rate of forwarding nodes over
the total number of nodes, the number of forwarding nodes becomes
almost constant for higher neighborhood densities ({\it e.g.}, for 30
and 50 neighbors in range).

\section{Conclusion}
\label{sec:conclusion}

In this paper, we have presented SAND, an approach to energy
conservation for wireless sensor networks. Energy consumption is one
of the most important factors that determines sensor node lifetime.
SAND is a fully decentralized, simple, and efficient algorithm able to
significantly extend not only sensor lifetime but also the entire
network lifetime. SAND focuses on turning off the nodes radio as much
as possible while still ensuring stimuli sensing and multi-hop routing
fidelity. We have presented the algorithm analysis and the simulation
of SAND. Our experiments show that SAND guarantees for a longer
time, (1) the existence of paths between any sensor node to at
least one sink node in the network and (2) the correct sensing of
stimulus in a monitored sensor network. SAND improves considerably
network lifetime proportionally to node density, at the price of the
slightly increasing paths length from sensor to sink nodes. Additional
analyses are being performed to evaluate the network behaviour of SAND
as nodes move. We also intend to study network partitions and to
validate our results with physical hardware in real scenarios.

\end{document}